\documentclass[12pt]{amsart}
\usepackage{amsmath,amssymb,amsthm,amsfonts,stmaryrd,graphicx}
\usepackage{stmaryrd}
\usepackage[english]{babel}
\usepackage{caption}
\textwidth 16.1cm \textheight 21.08cm \topmargin 0.0cm
\oddsidemargin 0.0cm \evensidemargin 0.0cm
\parskip -0.0cm
\vfuzz2pt 
\hfuzz2pt 
\newtheorem{theorem}{{\bf Theorem}}[section]
\newtheorem{corollary}{{\bf Corollary }}[section]
\newtheorem{lemma}{{\bf Lemma}}[section]
\newtheorem{proposition}{{\bf Proposition}}[section]

\theoremstyle{definition}
\newtheorem{definition}{{\bf Definition}}[section]
\theoremstyle{remark}

\numberwithin{equation}{section}

\newcommand{\N}{\mathbb{N}}

%


\newcommand{\be}{\begin{equation}}
\newcommand{\ee}{\end{equation}}
\newcommand{\bea}{\begin{eqnarray}}
\newcommand{\eea}{\end{eqnarray}}
\newcommand{\bd}{\begin{displaymath}}
\newcommand{\ed}{\end{displaymath}}

\newcommand{\op}{ \oplus_q  }
\newcommand{\om}{ \ominus_q  }

\begin{document}

\title[ Summation formula for generalized   discrete $q$-Hermite II polynomials ]{ Summation formula for generalized   discrete $q$-Hermite II polynomials}%
\author{  Sama Arjika  }%
\address{Faculty of Sciences and Technics,   University of Agadez,   Niger}
\email{rjksama2008@gmail.com}%
\subjclass[2010]{  33C45, 33D15, 33D50}
 \keywords{Basic orthogonal polynomials, Discrete $q$-Hermite II  polynomials,  Connection formula.}

\begin{abstract}
In this paper, we provide a  family of generalized discrete $q$-Hermite II polynomials  denoted by $\tilde{h}_{n,\alpha}(x,y|q)$.  An explicit relations connecting  them  with   the $q$-Laguerre   and  Stieltjes-Wigert polynomials are obtained. Summation   formula  is derived  by using different analytical means on their generating functions.  
\end{abstract}

\maketitle

\section{Introduction}
\label{intro}
In their paper, \` Alvarez-Nodarse   et al  \cite{Alvarez}, have introduced a $q$-extension of the discrete $q$-Hermite II polynomials  as:
\bea 
\label{mus}
\mathcal{ H}_{2n}^{(\mu)}(x;q):&=&  (-1)^n(q;q)_n\,L_n^{(\mu-1/2)}(x^2;q)\nonumber\\
\\
  \mathcal{H}_{2n+1}^{(\mu)}(x;q):&=& (-1)^n(q;q)_{n}\,x\,L_n^{(\mu+1/2)}(x^2;q)\nonumber
\eea
where $\mu>-1/2$, $L_n^{(\alpha)}(x;q)$ are the $q$-Laguerre polynomials given by
\bea
\label{Lagu}
L_n^{(\alpha)}(x;q):&=&\frac{(q^{\alpha+1};q)_n}{(q;q)_n}\;{}_1\Phi_1\left(\begin{array}{c}
q^{-n}
\\
q^{ \alpha+1} \end{array}\Big| q;-q^{n+\alpha+1}x\right)\nonumber\\
\\
&=&\frac{1}{(q;q)_n}\;{}_2\Phi_1\left(\begin{array}{c}
q^{-n}, -x\\
0 \end{array}\Big| q;q^{n+\alpha+1}x\right)\nonumber  
\eea
with  $(a;q)_0=1$,   $\displaystyle (a;q)_n=\prod_{k=0}^{n-1}(1-aq^k),\, n= 1, 2, \cdots$,  the $q$-shifted factorial, and
\be
 {}_r\Phi_s\left(\begin{array}{c}
q^{-n}, a_2, \cdots, a_r\\
b_1, b_2, \cdots, b_s \end{array}\Big| q;x\right)=\ee
\be
\sum_{k=0}^\infty \left[(-1)^kq^{k(k-1)/2}\right]^{1+s-r}\frac{(q^{-n};q)_k(a_2;q)_k\cdots(a_r;q)_k}{(b_1;q)_k(b_2;q)_k\cdots(b_s;q)_k}\frac{x^k}{(q;q)_k}\nonumber
\ee
  the usual generalized basic or q-hypergeometric function 
%
of degree $n$ in the variable $x$ (see    Slater \cite[Chap. 3]{SLATER},  Srivastava and Karlsson   \cite[p.347, Eq. (272)]{SrivastaKarlsson}  
 for details).  For  $\mu=0$ in (\ref{mus}),  the polynomials 
$ \mathcal{ H}_{n}^{(0)}(x;q)$ correspond to the  discrete $q$-Hermite II polynomials  \cite{AlSalam,ASK}, i.e.,  
$\mathcal{H}_n^{(0)}(x;q^2)=q^{n(n-1)}\tilde{h}_n(x;q).$
They  show that the  polynomials $ \mathcal{H}_{n}^{(\mu)}(x;q)$ satisfy the   orthogonality relation 
\be
\int_{-\infty}^\infty  \mathcal{H}_{n}^{(\mu)}(x;q) \mathcal{H}_{m}^{(\mu)}(x;q) \omega(x) dx=\pi\,q^{-n/2}(q^{1/2};q^{1/2})_n(q^{1/2};q)_{1/2}\;\delta_{nm}
\ee
on the whole real line $\mathbb{R}$ with  respect to the positive weight function $\omega(x)=1/(-x^2;q)_\infty$.  A detailed discussion of the properties of  the polynomials  $ \mathcal{ H}_{n}^{(\mu)}(x;q)$ can be found in \cite{Alvarez}.

Recently,   Saley Jazmat  et al   \cite{ghermite}, introduced  a novel  extension of discrete $q$-Hermite II polynomials by using  new $q$-operators. This extension  is  defined as:
\bea
\label{11a}
\tilde{h}_{2n,\alpha}(x;q)&=& (- 1 )^{n}\, q^{-n(2n-1)} \frac{(q;q)_{2n}}{(q^{2\alpha+2};q^2)_n}\, L_n^{(\alpha)}\left(x^2q^{-2\alpha-1};q^{2}\right)\nonumber\\
\\
\tilde{h}_{2n+1,\alpha}(x;q)&=&(- 1 )^{n}\,q^{-n(2n+1)} \frac{(q;q)_{2n+1}}{(q^{2\alpha+2};q^2)_{n+1}}\,x\,L_{n }^{(\alpha+1)}\left(x^2q^{-2\alpha-1};q^{2}\right).\nonumber
\eea
  For  $\alpha=-1/2$ in (\ref{11a}),  the polynomials $ \tilde{h}_{n,-\frac{1}{2}}(x;q)$ correspond to  the discrete $q$-Hermite II polynomials, i.e.,  $\tilde{h}_{n,-\frac{1}{2}}(x;q)= \tilde{h}_n(x;q).$ The generalized discrete $q$-Hermite II polynomials (\ref{11a})  satisfy the orthogonality relation
\be
\int_{-\infty}^{+\infty} \tilde{h}_{n,\alpha}(x;q)\tilde{h}_{m,\alpha}(x;q)\omega_\alpha(x;q)|x|^{2\alpha+1}d_qx
\ee
\be
=\frac{2q^{-n^2}\,(1-q)(-q,-q,q^2;q^2)_\infty}{(-q^{-2\alpha-1}, -q^{2\alpha+3},q^{2\alpha+2};q^2)_\infty}\frac{(q;q)_{n}^2}{(q;q)_{n,\alpha}}\,\delta_{n,m}\nonumber
\ee
on the whole real line $\mathbb{R}$ with  respect to the positive weight function $\omega_\alpha(x)=1/(-q^{-2\alpha-1}\,x^2;q^2)_\infty$.
 A detailed discussion of the properties of  the polynomials  $  \tilde{h}_{n,\alpha}(x;q)$ can be found in \cite{ghermite}. 

  Srivastava   and  Jain  \cite{SrivastaJain,JainSrivasta},    investigated    multilinear generating functions for   $q$-Hermite, $q$-Laguerre polynomials and other  special functions.   Relevant connections of these   multilinear generating functions with various known results for the classical or $q$-Hermite polynomials are also indicated.   They also proved many combinatorial $q$-series identities by applying the theory of $q$-hypergeometric functions (see  \cite{JainSrivasta}, for more details).

Motivated by  Saley Jazmat's    work \cite{ghermite},   our interest in this paper is to  introduce new family of ``{\it generalized discrete $q$-Hermite II polynomials  (in short gdq-H2P) $ \tilde{h}_{n,\alpha}(x,y|q)$}" which is an extension of  the generalized discrete $q$-Hermite II polynomials  $\tilde{h}_{n,\alpha}(x;q)$ and investigate  summation formulae. 

The paper is organized as follows. In Section 2, we recall   notations   to be used in the sequel. In Section 3, we define   a gdq-H2P  $ \tilde{h}_{n,\alpha}(x,y|q)$ and investigate  several properties.  In Section 4,  we derive  summation and inversion formulae for   gdq-H2P $ \tilde{h}_{n,\alpha}(x,y|q)$. In Section 5,   concluding remarks are given.  

\section{Notations and Preliminaries}
For the convenience of the reader, we provide in this section a summary of the
mathematical notations and definitions used in this paper.  We refer to the general references \cite{G,ASK}   and \cite{ghermite}  for the definitions and notations. Throughout this paper, we assume that $0<q<1,\, \alpha >-1$.

For a complex  number $a$, the $q$-shifted factorials  are defined by:
\be (a;q)_{0}=1;\, (a;q)_{n}=\prod_{k=0}^{n-1}(1-aq^{k}),
n=1, 2, \cdots; \, (a;q)_{\infty}=\prod_{k=0}^{\infty}(1-aq^{k})
\ee
and the  $q$-number    is defined by:
\be 
[n]_{q}={{1-q^n}\over{1-q}}, ~~~  \quad n!_q: =\prod_{k=1}^{n}[k]_{q},\quad 0!_{q}:=1,  ~n\in \N.
\ee
Let $x$ and $y$ be two real or complex numbers, the   Hahn \cite{HahnW}    $q$-addition $\op $ of $x$ and $y$  is given by:
\bea
\label{addition}
 \big(x\op  y\big)^n:&=&(x+y)(x+q y)\ldots (x+q^{n-1}y)\cr
&=&(q;q)_n\sum_{k=0}^n\frac{ q^{({}^k_2)}x^{n-k}y^k}{(q;q)_k(q;q)_{n-k}}
,\quad n\geq 1,\quad \big(x\op  y\big)^0:=1,
\eea
while the $q$-subtraction $\om$ is given by 
\be
\label{additionm}
 \big(x\om  y\big)^n:= \big(x\op (-y)\big)^n.
\ee
The generalized $q$-shifted factorials \cite{ghermite} are defined by
the   recursion relations
\be
[n+1]_{q,\alpha}!= [n+1+\theta_n(2\alpha+1)]_q\,[n]_{q,\alpha}!
\ee
and
\be
(q;q)_{n+1,\alpha}=(1-q)[n+1+\theta_n(2\alpha+1)]_q(q;q)_{n,\alpha},
\ee
where
\bea
\theta_n=\left\{\begin{array}{ll}1& \mbox{ if n even  }  \\
0& \mbox{ if n odd}.\end{array}\right. 
\eea
Remark that, for  $\alpha=-1/2$,  we have
\be
(q;q)_{n,-1/2}=(q;q)_n, \quad [n]_{q,-1/2}!= (1-q)^n(q;q)_n.
\ee
We denote 
\be 
\label{sam1}
(q;q)_{2n,\alpha}=(q^2;q^2)_{n}(q^{2\,\alpha+2};q^2)_n,
\ee
and
\be
\label{sam2}
(q;q)_{2n+1,\alpha}=(q^2;q^2)_{n}(q^{2\,\alpha+2};q^2)_{n+1}.
\ee 
 The two Euler's   $q$-analogues of the  exponential
functions are given by \cite{G}
\be
\label{q-Expo}
  {E}_q(x)
=\displaystyle\sum_{k=0}^{\infty}\displaystyle\frac{q^{({}^k_2)}\,x^{k}}{(q;q)_{k}}=(-x;q)_{\infty}
\ee
and
\be
\label{q-expo}
 e_q(x)
=\displaystyle\sum_{k=0}^{\infty}\displaystyle\frac{x^{k}}{(q;q)_{k}}=\displaystyle\frac{1}{(x;q)_{\infty}},\;\;\;\;|x|<1.
\ee
For $m\geq 1$ and by means of the generalized $q$-shifted factorials,   we define   two generalized $q$-exponential functions  as follows
\be
\label{q-Expo-alpha}
\tilde{E}_{q^m,\alpha}(x):=\displaystyle\sum_{k=0}^{\infty}\displaystyle
\frac{q^{mk(k-1)/2
}\,x^{k}}{(q^m;q^m)_{k,\alpha}},
\ee
and
\be
\label{q-expo-alpha}
\tilde{e}_{q^m,\alpha}(x):=\displaystyle\sum_{k=0}^{\infty}\displaystyle\frac{x^{k}}{(q^m;q^m)_{k,\alpha}},\;\;|x|<1.
\ee
Remark that, for  $m=1$ and $\alpha=-\frac{1}{2}$, we have:
\be
\tilde{E}_{q,\alpha}(x) =  E_{q}(x), \quad  \tilde{e}_{q,\alpha}(x)= e_{q}(x).
\ee
For $m=2,$ the following elementary result is useful in the sequel to establish the summation  formulae for  gdq-H2P:
\be
\label{iinsert} 
 \tilde{e}_{q^2,-\frac{1}{2}} (x)
\tilde{E}_{q^2,-\frac{1}{2}} (y)=\tilde{e}_{q^2,-\frac{1}{2}}(x\oplus_{q^2} y),
\ee
\be
\label{insert} 
\tilde{e}_{q,-\frac{1}{2}}(x)\tilde{E}_{q^{2},-\frac{1}{2}}(-y)=\tilde{e}_{q}(x\ominus _{q,q^{2}}y),\;\;
  \tilde{e}_{q^2,-\frac{1}{2}}(x)\tilde{E}_{q^2,-\frac{1}{2}}(-x)=1,
\ee
where 
\be 
(a\ominus _{q,q^2}b)^{n}:=n!_{q}\sum_{k=0}^{n}\frac{(-1)^{k}q^{k(k-1)}}{(n-k)!_{q}\;k!_{q^2}}a^{n-k}b^{k},\; \;(a\ominus _{q,q^2}b)^{0}:=1.
\ee
\section{Generalized discrete $q$-Hermite II   polynomials}
In this section, we introduce a  sequence of    gdq-H2P $\{\tilde{h}_{n,\alpha}(x,y|q)\}_{n=0}^\infty$. Several properties related to these polynomials  are derived. 
\begin{definition}
For $x,\,y\in\, \mathbb{R},$ the   gdq-H2P $\{\tilde{h}_{n,\alpha}(x,y|q)\}_{n=0}^\infty$ are defined by:
\be 
\label{sama:qdoublyH}
\tilde{h}_{n,\alpha}(x,y|q):=  (q;q)_n\sum_{k=0}^{\lfloor\,n/2\,\rfloor}
 \frac{ (-1)^k q^{-2n k +k(2k+1) }\;  x^{n-2k}\,y^k}{(q;q)_{n-2k,\alpha}\,  (q^2;q^2)_k}
\ee
and
\be
\label{def2}
\tilde{h}_{n,\alpha}(x,0|q):=\frac{(q;q)_n}{(q;q)_{n,\alpha}}x^n.
\ee 
\end{definition}
Remark that,  
\begin{enumerate}
\item  for $\displaystyle   \; y=1$, we get 
\be
\tilde{h}_{n,\alpha}(x,1|q)=\tilde{h}_{n,\alpha}(x;q)
\ee
where $\tilde{h}_{n,\alpha}(x;q)$ is the generalized discrete $q$-Hermite II polynomial  \cite{ghermite};
\item  for  $\displaystyle  \, \alpha=-1/2$ and $ y=1$, we have 
\be
\tilde{h}_{n,-1/2}(x,1|q)=\tilde{h}_{n }(x;q).
\ee
where $\tilde{h}_{n}(x;q)$ is the  discrete $q$-Hermite II polynomial   \cite{AlSalam,ASK}.
\item  Indeed since
\be
\lim_{q\to 1}\frac{(q^a;q)_n}{(1-q)^{n}}=(a)_n 
\ee
one readily verifies that
\be
\lim_{q\to 1}\frac{\tilde{h}_{n,-\frac{1}{2}}(\sqrt{1-q^2}x,1|q)}{(1-q^2)^{n/2}}=\frac{h_n^{\alpha+\frac{1}{2}}(x)}{2^n}
\ee
where $h_n^{\alpha+\frac{1}{2}}(x)$  is the Rosenblum’s generalized Hermite polynomial   \cite{Rosenblum}.
\end{enumerate}
\begin{lemma}
 The following   recursion relation   for   gdq-H2P  $\{\tilde{h}_{n,\alpha}(x,y|q)\}_{n=0}^\infty$  holds true.
\be 
\label{secondses}
 \frac{1-q^{n+1+\theta_n(2\alpha+1)}}{1-q^{n+1}} \tilde{h}_{n+1,\alpha}(x,y|q)
\ee
\be
 =
x\tilde{h}_{n,\alpha}(x,y|q)-y\,q^{-2n+1}(1-q^n) \tilde{h}_{n-1,\alpha}(x,y|q).\nonumber
\ee 
\end{lemma}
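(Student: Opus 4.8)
The plan is to prove the recursion by a direct manipulation of the defining series \eqref{sama:qdoublyH}, comparing coefficients of the monomial $x^{n+1-2k}y^k$ on both sides. First I would write out all three terms of the asserted identity as explicit finite sums over the floor-indexed parameter. The left-hand side, after dividing, carries the factor $(q;q)_{n+1}$ divided by $(q;q)_{n+1,\alpha}$-type terms; using the recursion $(q;q)_{n+1,\alpha}=(1-q)[n+1+\theta_n(2\alpha+1)]_q(q;q)_{n,\alpha}$ from the preliminaries, together with the observation that $\dfrac{1-q^{n+1+\theta_n(2\alpha+1)}}{1-q^{n+1}}\cdot\dfrac{(q;q)_{n+1}}{(q;q)_{n+1-2k,\alpha}}$ simplifies against the coefficient in $\tilde h_{n+1,\alpha}$, the left side should collapse to something of the shape $(q;q)_n\sum_k c_k\, x^{n+1-2k}y^k$ with an explicit $c_k$ involving $(q;q)_{n+1-2k,\alpha}$ and $(q^2;q^2)_k$.

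Next I would expand the right-hand side. The term $x\,\tilde h_{n,\alpha}(x,y|q)$ contributes $(q;q)_n\sum_{k}\dfrac{(-1)^k q^{-2nk+k(2k+1)} x^{n+1-2k}y^k}{(q;q)_{n-2k,\alpha}(q^2;q^2)_k}$, and the term $-y\,q^{-2n+1}(1-q^n)\,\tilde h_{n-1,\alpha}(x,y|q)$ contributes, after shifting the summation index $k\mapsto k-1$, a sum of the form $(q;q)_n\sum_k \dfrac{(-1)^{k}(\text{power of }q)\, x^{n+1-2k}y^{k}}{(q;q)_{n+1-2k,\alpha}(q^2;q^2)_{k-1}}$. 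The heart of the computation is then the purely algebraic verification that, for each fixed $k$ in the appropriate range, the coefficient on the left equals the sum of these two coefficients on the right. Crucially one must keep careful track of the parity of $n$ through $\theta_n$, since $(q;q)_{m,\alpha}$ for $m=n-2k$ and $m=n+1-2k$ behave differently according to whether $m$ is even or odd, and the boundary cases (the top value of $k$, where one of the three polynomials may have no term of that degree, and the even/odd split of $n$) need separate bookkeeping.

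For the coefficient identity itself I would factor out the common $(-1)^k$ and the dominant power of $q$, reduce everything to a relation between $\dfrac{1-q^{n+1+\theta_n(2\alpha+1)}}{(q;q)_{n+1-2k,\alpha}}$, $\dfrac{1}{(q;q)_{n-2k,\alpha}}$ and $\dfrac{1-q^n}{(q;q)_{n+1-2k,\alpha}(1-q^{2k})}$, and then invoke the recursion relations for $(q;q)_{m,\alpha}$ (applied once to pass from index $n-2k$ to $n+1-2k$, noting $\theta_{n-2k}=\theta_n$) and the elementary identity $(q^2;q^2)_k=(1-q^{2k})(q^2;q^2)_{k-1}$. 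The powers of $q$ should match after using $q^{-2(n+1)k+k(2k+1)}$ versus $q^{-2nk+k(2k+1)}$ versus the $q^{-2n+1}q^{-2(n-1)(k-1)+(k-1)(2k-1)}$ coming from the shifted third term; I expect a short exponent arithmetic to close this.

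The main obstacle I anticipate is not any deep idea but the careful case analysis forced by $\theta_n$: one genuinely must treat $n$ even and $n$ odd separately (and check the extreme values of $k$, including $k=\lfloor (n+1)/2\rfloor$ where the parity of $n+1-2k$ flips the structure of the generalized $q$-shifted factorial), because the factor $[n+1+\theta_n(2\alpha+1)]_q$ that normalizes $\tilde h_{n+1,\alpha}$ is exactly what mediates between the $\alpha$-deformed denominators at consecutive degrees. Once the recursion for $(q;q)_{n,\alpha}$ is used consistently, everything should reduce to the classical three-term relation for discrete $q$-Hermite II polynomials in the disguise of term-by-term coefficient matching, and the identity follows. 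Alternatively, if the bookkeeping becomes unwieldy, one can instead split into the even-index and odd-index subsequences, rewrite $\tilde h_{2n,\alpha}$ and $\tilde h_{2n+1,\alpha}$ via \eqref{11a}-type expressions in $q$-Laguerre polynomials evaluated with the extra variable $y$, and quote the contiguous relations for $L_n^{(\alpha)}$; but I would attempt the direct route first.
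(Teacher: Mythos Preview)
Your proposal is correct and follows essentially the same route as the paper: a direct coefficient comparison after expanding all three polynomials via \eqref{sama:qdoublyH}, shifting $k\mapsto k-1$ in the third term, and closing the identity using the recursion $(q;q)_{n+1-2k,\alpha}=(1-q^{n+1-2k+\theta_n(2\alpha+1)})(q;q)_{n-2k,\alpha}$ together with $(q^2;q^2)_k=(1-q^{2k})(q^2;q^2)_{k-1}$. The paper carries this out explicitly only for the even case, where the crucial coefficient identity collapses to $q^{2k}(1-q^{2n+2+2\alpha-2k})+(1-q^{2k})=1-q^{2n+2+2\alpha}$, and then states that the odd case is analogous; your plan to manage both cases via $\theta_n$ (noting $\theta_{n-2k}=\theta_n$) is the natural uniform packaging of the same computation.
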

\begin{proof}  To prove the    assertion   (\ref{secondses}), we consider  separately   even and  odd cases of the expression
\be
x\tilde{h}_{n,\alpha}(x,y|q)-y\,q^{-2n+1}(1-q^n) \tilde{h}_{n-1,\alpha}(x,y|q) .
\ee 
 For $n$ even, we have:
\be 
\label{dt}
 x\tilde{h}_{2n,\alpha}(x,y|q)= \frac{(q;q)_{2n}}{(q;q)_{2n,\alpha}}x^{2n+1}+ 
(q;q)_{2n}\sum_{k=1}^{ n }
 \frac{ (-1)^k  q^{-2n k +k(2k+1)  }  x^{2n-2k+1}\,y^k}{(q;q)_{2n-2k,\alpha}\,  (q^2;q^2)_k}. \nonumber
\ee
The right-hand side of the last relation can be written as
\be
\label{hs}
\frac{(q;q)_{2n}}{(q;q)_{2n,\alpha}}x^{2n+1}+(q;q)_{2n} \ee
\be
\times\sum_{k=1}^{n}
 \frac{ (-1)^k  q^{ -2k(2n+1)+k(2k+1)  }  x^{2n+1-2k}\,y^k}{(q;q)_{2n+1-2k,\alpha}\,  (q^2;q^2)_k}\left[q^{2k}(1-q^{2n+2+2\alpha-2k})\right].\nonumber
\ee
In the same way, 
\be 
\label{firdt}
 -y\,q^{-4n+1 }\,(1-q^{2n})\,\tilde{h}_{2n-1,\alpha}(x,y|q)  =-y\,q^{-4n+1}\,(q;q)_{2n}\nonumber
\ee
\be
\label{frta}
\times \sum_{k=0}^{n-1}
 \frac{ (-1)^k  q^{ -2 k(2n+1)+k(2k+1) }  x^{2n+1-2(k+1)}\,y^k}{(q;q)_{2n+1-2(k+1),\alpha}\,  (q^2;q^2)_k}.
\ee 
Change k to $k-1$ in (\ref{frta}), one obtains
\be
\label{rhs}
  (q;q)_{2n}\sum_{k=1}^{n}
 \frac{ (-1)^k  q^{ -2 k(2n+1)+k(2k+1) }  x^{2n+1-2k}\,y^k}{(q;q)_{2n+1-2k,\alpha}\,  (q^2;q^2)_{k}}(1-q^{2k}).
\ee
 Then   combining (\ref{hs}) and (\ref{rhs}), we have
\be 
\label{firdt}
 x\tilde{h}_{2n,\alpha}(x,y|q)-y\,q^{-4n+1 }\,(1-q^{2n})\,\tilde{h}_{2n- 1,\alpha} (x,y|q)  = 
\ee
\be
\frac{(q;q)_{2n}}{(q;q)_{2n,\alpha}}x^{2n+1}+(q;q)_{2n} \sum_{k=1}^{n}
 \frac{ (-1)^k  q^{ -2 k(2n+1)+k(2k+1) }  x^{2n+1-2k}\,y^k}{(q;q)_{2n+1-2k,\alpha}\,  (q^2;q^2)_k}\nonumber
\ee
\be
\times \left[q^{2k}(1-q^{2n+2+2\alpha-2k})+(1-q^{2k})\right].\nonumber
\ee
After simplification, it is equal to
\be
\frac{(q;q)_{2n}}{(q;q)_{2n,\alpha}}x^{2n+1}+\nonumber
\ee
\be
(1-q^{2n+2+2\alpha})(q;q)_{2n} \sum_{k=1}^{n}
 \frac{ (-1)^k  q^{ -2 k(2n+1)+k(2k+1) }  x^{2n+1-2k}\,y^k}{(q;q)_{2n+1-2k,\alpha}\,  (q^2;q^2)_k}.\nonumber
\ee
The last expression  can be written as
\be
\label{summe}
 \frac{1-q^{2n+2+2\alpha}}{1-q^{2n+1}}\tilde{h}_{2n+1,\alpha}(x,y|q).
\ee
Summarizing the above calculations in (\ref{firdt})-(\ref{summe}), we get the assertion (\ref{secondses}) for $n$ even. 
 In the  odd case, the proof follows the same steps as the even case.  
\end{proof}
\begin{theorem}
\label{thems}
We have:
\be 
\label{asser1}
\lim_{\alpha\to +\infty}\tilde{h}_{2n,\alpha}(x,y|q)=  q^{-n(2n-1)} (q;q)_{2n}\, (-y)^n\, S_n\left(x^2y^{-1}q^{-1};q^{2}\right)
\ee 
and
\be 
\label{asser2}
\lim_{\alpha\to +\infty}\tilde{h}_{2n+1,\alpha}(x,y|q)= q^{-n(2n+1)} (q;q)_{2n+1}\,x\, (-y)^n\,S_{n }\left(x^2y^{-1}q^{-1};q^{2}\right)
\ee 
where $S_n(x;q)$ are the Stieltjes-Wigert polynomials \cite{ASK}.
\end{theorem}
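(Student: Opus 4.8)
The plan is to start from the explicit hypergeometric-type representation of $\tilde h_{n,\alpha}(x,y|q)$ given in Definition~\ref{sama:qdoublyH}, specializing to the even and odd indices and using the product formulas (\ref{sam1})--(\ref{sam2}) to split $(q;q)_{2n-2k,\alpha}$ (resp. $(q;q)_{2n+1-2k,\alpha}$) as $(q^2;q^2)_{n-k}(q^{2\alpha+2};q^2)_{n-k}$ (resp. $(q^2;q^2)_{n-k}(q^{2\alpha+2};q^2)_{n-k+1}$). This rewrites, say for the even case,
\be
\tilde h_{2n,\alpha}(x,y|q)=(q;q)_{2n}\sum_{k=0}^{n}\frac{(-1)^k q^{-4nk+k(2k+1)}x^{2n-2k}y^k}{(q^2;q^2)_{n-k}(q^{2\alpha+2};q^2)_{n-k}(q^2;q^2)_k}.
\ee
The only $\alpha$-dependent factor is $1/(q^{2\alpha+2};q^2)_{n-k}$. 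Since $0<q<1$, as $\alpha\to+\infty$ we have $q^{2\alpha+2}\to 0$, hence $(q^{2\alpha+2};q^2)_{n-k}\to 1$ for every fixed $k$; the sum is finite, so one may pass to the limit termwise.

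The second step is to recognize the resulting $\alpha$-free sum as a Stieltjes--Wigert polynomial. I would use the standard representation $S_n(x;q)=\dfrac{1}{(q;q)_n}\,{}_1\Phi_1\!\left(\begin{array}{c}q^{-n}\\ 0\end{array}\Big|q;-q^{n+1}x\right)$ (equivalently $S_n(x;q)=\sum_{j=0}^n \frac{(-1)^j q^{j^2}}{(q;q)_j(q;q)_{n-j}}x^j$ up to the normalization used in \cite{ASK}), reindex the limit sum by $j=n-k$, and match powers of $q$, $x^2 y^{-1}q^{-1}$ and the base $q^2$. Concretely, after substituting $k=n-j$ the exponent of $q$ becomes a quadratic in $j$ that should reorganize as $-n(2n-1)$ (the global prefactor) plus the exponent $j^2$ (in base $q^2$, i.e. $2j^2-\dots$) appearing in $S_n$ evaluated at argument $x^2y^{-1}q^{-1}$ with base $q^2$; the factor $(-y)^n$ comes from pulling $y^k=y^{n-j}=y^n y^{-j}$ out of the sum. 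This is essentially the same bookkeeping already carried out in \cite{ghermite} for the $y=1$ case, so I expect it to be routine once the prefactors are tracked carefully.

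The odd case (\ref{asser2}) is handled identically: the extra factor $x$ and the index shift in $(q^{2\alpha+2};q^2)_{n-k+1}$ (which still tends to $1$) produce the stated $x\,(-y)^n S_n(x^2y^{-1}q^{-1};q^2)$. The main obstacle is purely clerical: getting the $q$-power exponents to collapse correctly, in particular verifying that the $k$-dependent exponent $-4nk+k(2k+1)$ together with the $k$-dependent part of the $(q^2;q^2)$ normalizations reassembles into the $j^2$-type exponent of the Stieltjes--Wigert polynomial in base $q^2$ with the claimed argument, and confirming that the constant prefactor is exactly $q^{-n(2n-1)}(q;q)_{2n}$. No analytic subtlety arises beyond the termwise limit of a finite sum, which is justified above.
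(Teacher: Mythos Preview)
Your approach is correct and more direct than the paper's. The paper does not pass to the limit termwise in the defining sum; instead it first establishes an intermediate representation in terms of $q$-Laguerre polynomials (Lemma~\ref{thfems}), obtained by rewriting the sum as a ${}_2\Phi_1$ (Proposition~\ref{propo}) and then applying the transformation ${}_2\Phi_1\to{}_1\Phi_1$ from \cite[Eq.~(0.6.17)]{ASK}. Only then does it invoke the standard limit $L_n^{(\alpha)}\to S_n$ as $\alpha\to+\infty$. Your route bypasses this detour: you take the limit directly in the finite sum (legitimate, since $(q^{2\alpha+2};q^2)_{n-k}\to 1$ termwise) and reindex $j=n-k$ to recognize $S_n$ in base $q^2$. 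The advantage of the paper's path is that the Laguerre representation (\ref{entefunc})--(\ref{entfunc}) is of independent interest and makes the limit a one-line corollary of a known Askey-scheme relation; the advantage of yours is that it avoids the ${}_2\Phi_1$ transformation altogether and reduces the proof to a single exponent check, which indeed collapses since $-4nk+k(2k+1)=2(n-k)^2-(n-k)-n(2n-1)$.
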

In order to prove Theorem \ref{thems}, we need the following Lemma. 
\begin{lemma}
\label{thfems}
For $\alpha >-1$, the  sequence of gdq-H2P $\{\tilde{h}_{n,\alpha}(x,y|q)\}_{n=0}^\infty$   can be written in terms of   $q$-Laguerre polynomials $L_n^{(\alpha)}(x;q)$ as
\be 
\label{entefunc}
\tilde{h}_{2n,\alpha}(x,y|q)=  q^{-n(2n-1)} \frac{(q;q)_{2n}}{(q^{2\alpha+2};q^2)_n}\, (-y)^n\, L_n^{(\alpha)}\left(x^2y^{-1}q^{-2\alpha-1};q^{2}\right)
\ee 
and
\be 
\label{entfunc}
\tilde{h}_{2n+1,\alpha}(x,y|q)= q^{-n(2n+1)} \frac{(q;q)_{2n+1}}{(q^{2\alpha+2};q^2)_{n+1}}\,x\, (-y)^n\,L_{n }^{(\alpha+1)}\left(x^2y^{-1}q^{-2\alpha-1};q^{2}\right).
\ee 
\end{lemma}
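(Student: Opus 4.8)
The plan is to prove both identities by a direct term-by-term comparison: expand the $q$-Laguerre polynomial on the right-hand side into its terminating ${}_1\Phi_1$ series via \eqref{Lagu}, perform the base change $q\mapsto q^{2}$ together with the argument substitution built into \eqref{entefunc}--\eqref{entfunc}, and match the outcome against the defining series \eqref{sama:qdoublyH} after reversing the order of summation. It suffices to treat the even identity \eqref{entefunc} in full; the odd one \eqref{entfunc} is the same computation with $\alpha$ replaced by $\alpha+1$, one extra factor $x$, and the elementary reduction $(q^{2\alpha+2};q^{2})_{n+1}=(1-q^{2\alpha+2})(q^{2\alpha+4};q^{2})_{n}$.

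First I would rewrite the ${}_1\Phi_1$ in \eqref{Lagu} as a finite sum and use the reversal rule $(q^{-2n};q^{2})_{k}=(-1)^{k}q^{k(k-1)-2nk}(q^{2};q^{2})_{n}/(q^{2};q^{2})_{n-k}$ to put the $q$-Laguerre polynomial into the ``symmetric'' form
\[
L_{n}^{(\alpha)}(z;q^{2})=(q^{2\alpha+2};q^{2})_{n}\sum_{k=0}^{n}\frac{(-1)^{k}\,q^{2k(k+\alpha)}\,z^{k}}{(q^{2};q^{2})_{n-k}\,(q^{2\alpha+2};q^{2})_{k}\,(q^{2};q^{2})_{k}}.
\]
Setting $z=x^{2}y^{-1}q^{-2\alpha-1}$ makes the power of $q$ collapse, since $2k(k+\alpha)-(2\alpha+1)k=k(2k-1)$; multiplying through by the prefactor $q^{-n(2n-1)}(q;q)_{2n}(-y)^{n}/(q^{2\alpha+2};q^{2})_{n}$ then cancels $(q^{2\alpha+2};q^{2})_{n}$ and turns $(-y)^{n}y^{-k}$ into $(-1)^{n-k}y^{n-k}$.

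The last step is to substitute $j=n-k$ and invoke \eqref{sam1} to recognise $(q^{2};q^{2})_{n-j}(q^{2\alpha+2};q^{2})_{n-j}=(q;q)_{2n-2j,\alpha}$ in the denominator; what then remains is the single exponent identity $-n(2n-1)+(n-j)(2n-2j-1)=-4nj+j(2j+1)$, a one-line expansion, after which the expression is literally \eqref{sama:qdoublyH} evaluated at index $2n$. For the odd case the parallel steps use $2k(k+\alpha+1)-(2\alpha+1)k=k(2k+1)$, the relation $(q^{2};q^{2})_{n-j}(q^{2\alpha+2};q^{2})_{n-j+1}=(q;q)_{2n-2j+1,\alpha}$ coming from \eqref{sam2} (after the prefactor has absorbed the ``$+1$'' via $(q^{2\alpha+2};q^{2})_{n+1}=(1-q^{2\alpha+2})(q^{2\alpha+4};q^{2})_{n}$), and the exponent identity $-n(2n+1)+(n-j)(2n-2j+1)=-2(2n+1)j+j(2j+1)$.

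I do not expect a genuine obstacle here: the argument is pure bookkeeping. The only point requiring care is to juggle the three sources of powers of $q$ simultaneously --- the one created inside the hypergeometric series by $q\mapsto q^{2}$, the one created by the shifted argument $z=x^{2}y^{-1}q^{-2\alpha-1}$, and the global prefactor $q^{-n(2n\mp 1)}$ --- together with the reversal $j=n-k$; once these are combined they collapse exactly to the exponents above. As an alternative one could note that \eqref{sama:qdoublyH} is homogeneous of weighted degree $n$ when $x$ is assigned weight $1$ and $y$ weight $2$, so that both claimed formulas reduce, on setting $y=1$, to Saley Jazmat's relations \eqref{11a}; or one could argue by induction on $n$ from the recursion \eqref{secondses} and the three-term recurrence of the $q$-Laguerre polynomials. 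The direct series computation is the most transparent, and it is the one I would write out.
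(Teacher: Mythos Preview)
Your argument is correct: the direct series expansion of $L_n^{(\alpha)}(z;q^{2})$, the reversal identity for $(q^{-2n};q^{2})_k$, and the reindexing $j=n-k$ all go through as you describe, and the two exponent identities you isolate are exactly what is needed to land on \eqref{sama:qdoublyH}. Your handling of the odd case via $(q^{2\alpha+2};q^{2})_{n+1}=(1-q^{2\alpha+2})(q^{2\alpha+4};q^{2})_{n}$, so that the leftover $1/(1-q^{2\alpha+2})$ recombines with $(q^{2\alpha+4};q^{2})_{n-j}$ to give $(q^{2\alpha+2};q^{2})_{n-j+1}$ and hence $(q;q)_{2n-2j+1,\alpha}$ by \eqref{sam2}, is also right.

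Your route differs from the paper's. The paper first establishes the ${}_2\Phi_1$ representation of Proposition~\ref{propo} (starting from \eqref{sama:qdoublyH} and using the identity $(a;q)_{n-k}=(a;q)_n(a^{-1}q^{1-n};q)_k^{-1}(-q/a)^{k}q^{\binom{k}{2}-nk}$), and then applies the known transformation \cite[p.17, Eq.~(0.6.17)]{ASK} from ${}_2\Phi_1$ to ${}_1\Phi_1$ to recognise the $q$-Laguerre polynomial via \eqref{Lagu}. In effect the paper travels $\tilde h_{n,\alpha}\to{}_2\Phi_1\to{}_1\Phi_1\to L_n^{(\alpha)}$, while you travel $L_n^{(\alpha)}\to$ explicit finite sum $\to\tilde h_{n,\alpha}$. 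Your approach is more self-contained in that it avoids quoting the ${}_2\Phi_1\leftrightarrow{}_1\Phi_1$ transformation; the paper's approach has the side benefit of recording the ${}_2\Phi_1$ form \eqref{hyperg1} as an independent result. Your suggestion to shortcut via the homogeneity $\tilde h_{n,\alpha}(x,y|q)=y^{n/2}\tilde h_{n,\alpha}(xy^{-1/2},1|q)$ and then appeal to \eqref{11a} is also valid and would be the shortest proof of all.
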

In order to prove Lemma   \ref{thfems}, we need the following Proposition. 
\begin{proposition}
\label{propo}
For $\alpha >-1$, the sequence of gdq-H2P $\{\tilde{h}_{n,\alpha}(x,y|q)\}_{n=0}^\infty$   can be written in terms of basic hypergeometric functions as
\be 
\label{hyperg1}
\tilde{h}_{n,\alpha}(x,y|q)=   \frac{(q;q)_{n}}{(q;q)_{n,\alpha}}\,x^{n}
\,{}_2\Phi_{1}\left(\begin{array}{c}q^{-n},q^{-n-2\alpha}\\
0\end{array}\Big|\;q^2;\;-\frac{ y\,q^{2\alpha+3}}{x^2}\right). 
\ee
\end{proposition}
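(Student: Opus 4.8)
The plan is to transform the finite sum in the definition (\ref{sama:qdoublyH}) termwise into the shape of a basic hypergeometric summand, i.e.\ a quotient $\frac{(A;q^2)_k(B;q^2)_k}{(q^2;q^2)_k}z^k$. Since the generalized $q$-shifted factorial $(q;q)_{n-2k,\alpha}$ occurring in the denominator of (\ref{sama:qdoublyH}) has one closed form (\ref{sam1}) when $n-2k$ (equivalently $n$) is even and another (\ref{sam2}) when it is odd, the argument splits according to the parity of $n$; I carry out the even case $n=2m$ in full and indicate why the odd case $n=2m+1$ is parallel.

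For $n=2m$ one has $(q;q)_{n-2k,\alpha}=(q^2;q^2)_{m-k}(q^{2\alpha+2};q^2)_{m-k}$ by (\ref{sam1}), and the $k=0$ term of (\ref{sama:qdoublyH}) is precisely the prefactor $\frac{(q;q)_{2m}}{(q;q)_{2m,\alpha}}x^{2m}$ appearing in (\ref{def2}) and in (\ref{hyperg1}). Dividing the generic $k$-th summand by this prefactor leaves
\[
\frac{(q^2;q^2)_m\,(q^{2\alpha+2};q^2)_m}{(q^2;q^2)_{m-k}\,(q^{2\alpha+2};q^2)_{m-k}\,(q^2;q^2)_k}\,(-1)^k\,q^{-4mk+k(2k+1)}\Big(\frac{y}{x^2}\Big)^k .
\]
Now apply the elementary reversal $\dfrac{(a;q)_m}{(a;q)_{m-k}}=(-a)^k\,q^{mk-\binom{k+1}{2}}\,(q^{1-m}a^{-1};q)_k$, with $q$ replaced by $q^2$, once with $a=q^2$ (which converts $(q^2;q^2)_m/(q^2;q^2)_{m-k}$ into a power of $q$ times $(q^{-2m};q^2)_k$) and once with $a=q^{2\alpha+2}$ (which converts $(q^{2\alpha+2};q^2)_m/(q^{2\alpha+2};q^2)_{m-k}$ into a power of $q$ times $(q^{-2m-2\alpha};q^2)_k$). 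Collecting the three sign factors, which multiply to $(-1)^k$, and the three $q$-exponents — the $-4mk+k(2k+1)$ from the definition together with the two exponents produced by the reversals — one checks that the $q^{mk}$ and $q^{k^2}$ contributions cancel and a net factor $q^{(2\alpha+3)k}$ survives. Since $(-1)^k q^{(2\alpha+3)k}(y/x^2)^k=(-yq^{2\alpha+3}/x^2)^k$, the $k$-th summand has become
\[
\frac{(q^{-2m};q^2)_k\,(q^{-2m-2\alpha};q^2)_k}{(q^2;q^2)_k}\Big(-\frac{yq^{2\alpha+3}}{x^2}\Big)^k ,
\]
which — recalling $(0;q^2)_k=1$ and that the prefactor $\big[(-1)^kq^{k(k-1)/2}\big]^{1+s-r}$ in the paper's definition of ${}_r\Phi_s$ equals $1$ for $(r,s)=(2,1)$ — is exactly the $k$-th term of the ${}_2\Phi_1$ in (\ref{hyperg1}) at $n=2m$. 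Summing over $k$ (the upper limit $m=\lfloor n/2\rfloor$ being forced by the vanishing of $(q^{-2m};q^2)_k$ for $k>m$) yields (\ref{hyperg1}) in the even case.

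For $n=2m+1$ the same steps apply, now using $(q;q)_{n-2k,\alpha}=(q^2;q^2)_{m-k}(q^{2\alpha+2};q^2)_{m-k+1}$ and $(q;q)_{n,\alpha}=(q^2;q^2)_m(q^{2\alpha+2};q^2)_{m+1}$ from (\ref{sam2}); the two reversal identities, applied this time to $(q^2;q^2)_m/(q^2;q^2)_{m-k}$ and to $(q^{2\alpha+2};q^2)_{m+1}/(q^{2\alpha+2};q^2)_{m-k+1}$, together with the same exponent bookkeeping, again collapse each summand to a single terminating ${}_2\Phi_1$ term and give (\ref{hyperg1}) for odd $n$ as well. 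The one genuine difficulty is precisely this exponent accounting: essentially nothing is happening beyond rewriting finite products by means of the reversal identity, but one must track every power of $q$ and every sign carefully so that they simplify to the clean coefficient $(-1)^k q^{(2\alpha+3)k}$; once the even case has been organized, the odd case is a mechanical repetition.
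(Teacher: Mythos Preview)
Your proof is correct and follows essentially the same route as the paper: split by parity, use (\ref{sam1})--(\ref{sam2}) to factor $(q;q)_{n-2k,\alpha}$, apply the reversal identity $(a;q)_{n-k}=\dfrac{(a;q)_n}{(a^{-1}q^{1-n};q)_k}\left(-\dfrac{q}{a}\right)^kq^{\binom{k}{2}-nk}$ (in base $q^2$) at $a=q^2$ and $a=q^{2\alpha+2}$, and collect the $q$-exponents to recognize the ${}_2\Phi_1$ summand. The only difference is cosmetic: you phrase the reversal as a formula for the ratio $(a;q)_m/(a;q)_{m-k}$ rather than for $(a;q)_{n-k}$, and you make the exponent bookkeeping slightly more explicit than the paper does.
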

\begin{proof}
 In fact, for $n$ even, and by using  
\be 
(q;q)_{2n-2k,\alpha}=(q^2;q^2)_{n -k}(q^{2\alpha+2};q^2)_{n -k},
\ee 
the  gdq-H2P $\tilde{h}_{n,\alpha}(x,y|q)$  defined in (\ref{sama:qdoublyH}) can be rewritten as 
 \be 
\label{samdoublyH}
\tilde{h}_{2n,\alpha}(x,y|q)=
 (q;q)_{2n}\sum_{k=0}^{n}
 \frac{ (-1)^k q^{-4n k +k(2k+1) }  x^{2n-2k}\,y^k}{(q^2;q^2)_{n -k}(q^{2\alpha+2};q^2)_{n -k}\,  (q^2;q^2)_k}. 
\ee 
From the formula  \cite[p.9,  Eq. (0.2.12)]{ASK}  
\be
(a;q)_{n-k}=\frac{(a;q)_n}{(a^{-1}q^{1-n};q)_k}\left(-\frac{q}{a}\right)^kq^{({}^k_2)-nk},
\ee
we have for $a=q^2$ and $q^{2\alpha+2}$,  
\be
\label{ases}
 \tilde{h}_{2n,\alpha}(x,y|q)=\frac{(q;q)_{2n}\, x^{2n}}{(q;q)_{2n,\alpha }}
  \sum_{k=0}^{n}
 \frac{ (-1)^k q^{-4nk+k(2k+1)  } (q^{-2n}, q^{-2n-2\alpha};q^2)_{k} }{ (q^2;q^2)_k   q^{4({}^k_2)-4nk - 2\alpha k}}\left(\frac{y}{x^2}\right)^k.  \nonumber
\ee 
After simplification,   the last equation reads 
\be
  \tilde{h}_{2n,\alpha}(x,y|q)=\frac{(q;q)_{2n}  }{(q;q)_{2n,\alpha}}
x^{2n}\sum_{k=0}^{n}
 \frac{   (q^{-2n}, q^{-2n-2\alpha};q^2)_{k}  }{ (q^2;q^2)_k}\left(-\frac{y \,q^{2\alpha+3}}{x^2}\right)^k.
\ee 
  In the  odd case, the proof follows the same steps as the even case.  
\end{proof}
Now, we are in position to prove    Lemma \ref{thems}.
\begin{proof}{(of Lemma \ref{thems})}  For   $n$ even, the relation (\ref{hyperg1}) becomes:
\be 
\label{asks}
\tilde{h}_{2n,\alpha}(x,y|q)=   
\frac{(q;q)_{2n}}{(q;q)_{2n,\alpha}}\,x^{2n}\,{}_2\Phi_{1}\left(\begin{array}{c}q^{-2n},q^{-2n-2\alpha}\\
0\end{array}\Big|\;q^2;\;-\frac{ y\,q^{2\alpha+3}}{x^2}\right).
\ee
By  taking $a^{-1}=q^{-2\alpha-2}$ and $z=-q^{2n+1}\,x^2y^{-1}$ and the formula \cite[p.17, Eq. (0.6.17)]{ASK}
\be
{}_2\Phi_1\left(\begin{array}{c}q^{-n},a^{-1}q^{1-n}\\
0\end{array}\Big|\;q;\;\frac{a q^{n+1}}{z}\right)=(a;q)_n(qz^{-1})^n {}_1\Phi_1\left(\begin{array}{c}q^{-n}\\
a\end{array}\Big|\;q; z\right)
\ee
we have 
\be
\label{askrs}
  {}_2\Phi_1\left(\begin{array}{c}q^{-2n},q^{-2n-2\alpha}\\
0\end{array}\Big|\;q^2;\;-\frac{ y\,q^{2\alpha+3}}{ x^2}\right)=
\ee
\be
(q^{2\alpha+2};q^2)_n\left(-\frac{y}{x^2}\right)^{n}q^{-2n^2+n}\;{}_1\Phi_1\left(\begin{array}{c}q^{-2n}\\
q^{2+2\alpha}\end{array}\Big|\;q^2;\;-\frac{ q^{2n+1}\,x^2}{ y}\right). \nonumber
\ee 
By using (\ref{Lagu}), the relation (\ref{askrs}) can be written as
\be
\label{results}
 q^{-2n^2+n} \;(q^2;q^2)_n \left(-\frac{y}{x^2}\right)^{n}L_n^{(\alpha)}
\left(x^2y^{-1}q^{-2\alpha-1};q^{2}\right).
\ee
The assertion (\ref{entefunc}) of Lemma \ref{thems} follows  by
summarizing the above calculations in (\ref{asks})-(\ref{results}). \\
In the odd case, the proof follows the same steps as the even case.  
\end{proof}
\begin{proof}{(of Theorem \ref{thfems})} By taking  the limit $\alpha \to +\infty$ in the assertions (\ref{entefunc}) and (\ref{entfunc}) of Lemma \ref{thems}, respectively,  we get the assertions (\ref{asser1}) and (\ref{asser2}) of  Theorem  \ref{thfems}.   
\end{proof}
\section{Connection  formulae for the generalized discrete $q$-Hermite II   polynomials $\{\tilde{h}_{n,\alpha}(x,y|q)\}_{n=0}^\infty$}
We begin this section with the following theorem:
\begin{theorem}
\label{thm1}
The sequence of  gdq-H2P $\{\tilde{h}_{n,\alpha}(x,y|q)\}_{n=0}^\infty$, which is  defined by the relation (\ref{sama:qdoublyH}),  satisfies the connection formula
\be 
\label{thme1}
  \tilde{h}_{n,\alpha}(x, \omega|q)=
 (q;q)_n\sum_{k=0}^{\lfloor\,n/2\,\rfloor}  \frac{q^{-2n k +k(2k+1)  }\,(-\omega\oplus_{q^2} y)^{k } }{(q^2;q^2)_k\,(q;q)_{n-2k}}   \tilde{h}_{n-2k,\alpha}(x,y|q). 
\ee 
\end{theorem}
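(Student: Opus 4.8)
The plan is to expand the right-hand side of (\ref{thme1}) as a polynomial in the three indeterminates $x,y,\omega$ and to verify that the result does not depend on $y$ and coincides with the defining series (\ref{sama:qdoublyH}) of $\tilde h_{n,\alpha}(x,\omega|q)$. First I would insert into the right-hand side of (\ref{thme1}) the definition (\ref{sama:qdoublyH}) of $\tilde h_{n-2k,\alpha}(x,y|q)$ — calling its summation index $\ell$ — and expand the Hahn $q$-addition $(-\omega\oplus_{q^2}y)^{k}$ by (\ref{addition}) with $q$ replaced by $q^{2}$:
\[
(-\omega\oplus_{q^2}y)^{k}=(q^2;q^2)_k\sum_{j=0}^{k}\frac{q^{j(j-1)}\,(-\omega)^{k-j}\,y^{j}}{(q^2;q^2)_{j}\,(q^2;q^2)_{k-j}} .
\]
The factors $(q;q)_{n-2k}$ and $(q^2;q^2)_{k}$ in the denominator of (\ref{thme1}) cancel against the ones thus produced, and one is left with a triple sum over $k$, over $j$ with $0\le j\le k$, and over $\ell$, whose general term is a scalar multiple of $(-\omega)^{k-j}y^{j+\ell}x^{n-2k-2\ell}$.

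Next I would reorganise this triple sum according to the exponent $p:=k-j$ of $\omega$ and the exponent $r:=j+\ell$ of $y$; then the exponent of $x$ is automatically $n-2p-2r$, so all monomials are homogeneous of degree $n$ for the grading $\deg x=1,\ \deg y=\deg\omega=2$. For fixed $p,r$ with $2(p+r)\le n$ one checks that $j$ runs over the full interval $0\le j\le r$ (the constraints $k\le\lfloor n/2\rfloor$, $j\le k$ and $0\le\ell\le\lfloor(n-2k)/2\rfloor$ all reduce to $2(p+r)\le n$), whereas for $2(p+r)>n$ no such monomial appears on either side of (\ref{thme1}). The heart of the matter is then the evaluation, with $k=p+j$ and $\ell=r-j$, of the $q$-exponent
\[
-2nk+k(2k+1)+j(j-1)-2(n-2k)\ell+\ell(2\ell+1);
\]
collecting the $-2n(\,\cdot\,)$ terms, the quadratic terms and the linear terms in $(p+j)$ and $(r-j)$, one finds it equals $\big[-2n(p+r)+2(p+r)^{2}+(p+r)\big]+j(j-1)$, so every bit of dependence on $j$ collapses to the single summand $j(j-1)=2\binom{j}{2}$.

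Pulling the $j$-free factor out of the sum, the coefficient of $x^{n-2p-2r}y^{r}\omega^{p}$ in the right-hand side of (\ref{thme1}) becomes
\[
\frac{(-1)^{p}(q;q)_{n}\,q^{-2n(p+r)+2(p+r)^{2}+(p+r)}}{(q^2;q^2)_{p}\,(q;q)_{n-2p-2r,\alpha}}\ \sum_{j=0}^{r}\frac{(-1)^{r-j}q^{j(j-1)}}{(q^2;q^2)_{j}\,(q^2;q^2)_{r-j}} .
\]
The inner sum is the coefficient of $x^{r}$ in $\tilde E_{q^{2},-1/2}(x)\,\tilde e_{q^{2},-1/2}(-x)$, which equals $\delta_{r,0}$ by the inversion relation in (\ref{insert}) (equivalently $(-x;q^{2})_{\infty}\cdot 1/(-x;q^{2})_{\infty}=1$). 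Hence all monomials with $r\ge1$ cancel, while for $r=0$ the coefficient reduces to $(-1)^{p}(q;q)_{n}q^{-2np+p(2p+1)}/\big[(q^2;q^2)_{p}(q;q)_{n-2p,\alpha}\big]$, which is precisely the coefficient of $x^{n-2p}\omega^{p}$ in (\ref{sama:qdoublyH}) for $\tilde h_{n,\alpha}(x,\omega|q)$; this establishes (\ref{thme1}). I expect the only real obstacle to be bookkeeping: performing the $q$-exponent simplification without slips and confirming that the reindexed sums run exactly over $0\le j\le r$, so that the telescoping identity coming from (\ref{insert}) applies with no boundary corrections — the rest is mechanical.
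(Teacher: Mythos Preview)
Your proof is correct, but it follows a genuinely different route from the paper. The paper proceeds via generating functions: it first establishes that
\[
\tilde e_{q^2,-\frac12}(-yt^{2})\,\tilde E_{q,\alpha}(xt)=\sum_{n\ge0}\frac{q^{\binom{n}{2}}t^{n}}{(q;q)_n}\tilde h_{n,\alpha}(x,y|q),
\]
then writes the same identity with $y$ replaced by $\omega$, divides one by the other using the inversion relation in (\ref{insert}) to obtain a factor $\tilde e_{q^2,-\frac12}\big[(-\omega\oplus_{q^2}y)t^{2}\big]$, and finally equates coefficients of $t^{n}$. Your argument bypasses the generating function entirely and works at the level of coefficients from the start: you expand both $(-\omega\oplus_{q^2}y)^{k}$ and $\tilde h_{n-2k,\alpha}(x,y|q)$, reindex by the exponents of $\omega$ and $y$, and reduce the inner sum to $\delta_{r,0}$ via the same identity (\ref{insert}) read coefficientwise. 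Thus both proofs ultimately rest on the inverse pair $\tilde E_{q^2,-\frac12}\cdot\tilde e_{q^2,-\frac12}=1$; the paper invokes it once on the generating-function side, while you invoke it termwise after a $q$-exponent calculation. Your approach is more elementary and self-contained (no auxiliary lemma on generating functions is needed), at the cost of the somewhat delicate bookkeeping you flagged; the paper's approach is slicker and makes the structural reason for (\ref{thme1}) more transparent, namely that only the $\tilde e_{q^2,-\frac12}(-yt^{2})$ factor in the generating function depends on the second variable.
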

To prove   Theorem \ref{thm1}, we need the following Lemma.
\begin{lemma}
 The following  generating function   for  gdq-H2P  $\{\tilde{h}_{n,\alpha}(x,y|q)\}_{n=0}^\infty$  holds true.
\bea
\label{gentfunc}
\tilde{e}_{q^2,-\frac{1}{2}}(-yt^2)\tilde{E}_{q ,\alpha}(xt)=\sum_{n=0}^\infty\frac{q^{({}^n_2)}\,t^n}{(q;q)_n}\tilde{h}_{n,\alpha}(x,y|q),\quad |yt|<1.
\eea
\end{lemma}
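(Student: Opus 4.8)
The plan is to prove the generating function identity \eqref{gentfunc} by a direct computation: expand the two generalized $q$-exponentials on the left-hand side using their defining series \eqref{q-Expo-alpha} and \eqref{q-expo-alpha}, multiply the Cauchy products together, and then reorganize the double sum so that it matches the explicit formula \eqref{sama:qdoublyH} for $\tilde h_{n,\alpha}(x,y|q)$. Concretely, I would write
\[
\tilde e_{q^2,-\frac12}(-yt^2)\,\tilde E_{q,\alpha}(xt)
=\Bigl(\sum_{k=0}^\infty \frac{(-1)^k y^k t^{2k}}{(q^2;q^2)_k}\Bigr)
 \Bigl(\sum_{j=0}^\infty \frac{q^{\binom j2}x^j t^j}{(q;q)_{j,\alpha}}\Bigr),
\]
using $(q^2;q^2)_{k,-1/2}=(q^2;q^2)_k$ from the $\alpha=-1/2$ reduction. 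Collecting the coefficient of $t^n$ means summing over pairs $(j,k)$ with $j+2k=n$, i.e. $j=n-2k$ and $0\le k\le\lfloor n/2\rfloor$, which gives
\[
[t^n]\ \text{LHS}=\sum_{k=0}^{\lfloor n/2\rfloor}\frac{(-1)^k y^k\, q^{\binom{n-2k}{2}} x^{n-2k}}{(q^2;q^2)_k\,(q;q)_{n-2k,\alpha}}.
\]

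The remaining task is then purely bookkeeping on $q$-powers: I must check that
\[
q^{\binom n2}\cdot\frac{1}{(q;q)_n}\cdot(q;q)_n\cdot\frac{(-1)^k q^{-2nk+k(2k+1)}}{(q;q)_{n-2k,\alpha}(q^2;q^2)_k}x^{n-2k}y^k
\]
(the coefficient produced by the right-hand side of \eqref{gentfunc} together with \eqref{sama:qdoublyH}) equals the coefficient extracted above. Since the $(q;q)_n$ cancels, this reduces to the single exponent identity $\binom n2 + \bigl(-2nk+k(2k+1)\bigr)=\binom{n-2k}{2}$, which one verifies directly: $\binom{n-2k}{2}=\binom n2-2nk+k(2k+1)$ expands to $\tfrac12(n-2k)(n-2k-1)=\tfrac12 n(n-1)-2nk+\tfrac12(4k^2+2k)$, matching exactly. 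The sign $(-1)^k$ and the powers $x^{n-2k}y^k$ clearly agree, so the two coefficient expressions coincide term by term.

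I would present the argument in this order: first state the series expansions and note the $\alpha=-1/2$ reductions $(q^2;q^2)_{k,-1/2}=(q^2;q^2)_k$; second form the Cauchy product and isolate $[t^n]$; third invoke \eqref{sama:qdoublyH} to recognize the resulting finite sum as $q^{\binom n2}(q;q)_n^{-1}\tilde h_{n,\alpha}(x,y|q)$ after checking the exponent identity $\binom{n-2k}{2}=\binom n2-2nk+k(2k+1)$. I should also address the convergence caveat: the factor $\tilde e_{q^2,-1/2}(-yt^2)=1/(yt^2;q^2)_\infty$ (by \eqref{q-expo}) requires $|yt^2|<1$, which is implied by the stated hypothesis $|yt|<1$ together with $0<q<1$ and $|t|$ small; meanwhile $\tilde E_{q,\alpha}(xt)$ is an entire-type series in $xt$, so the product and the rearrangement of the absolutely convergent double series are justified in a neighborhood of $t=0$, and the identity then holds as an identity of formal power series in $t$ in any case. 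The only mildly delicate point is justifying the interchange/rearrangement of the double sum, but since one side is absolutely convergent for $|t|$ small this is routine; there is no real obstacle here — the proof is essentially a one-line exponent check once the Cauchy product is written down.
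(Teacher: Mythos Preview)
Your proposal is correct and follows essentially the same approach as the paper: the paper starts from the right-hand side, inserts the definition \eqref{sama:qdoublyH}, uses the exponent identity $\binom{n}{2}-2nk+k(2k+1)=\binom{n-2k}{2}$ to rewrite the double sum, and then substitutes $r=n-2k$ to factor it as $\tilde e_{q^2,-1/2}(-yt^2)\,\tilde E_{q,\alpha}(xt)$, which is exactly your computation run in the opposite direction. The only difference is the starting point; the substance (Cauchy product plus the one exponent check) is identical.
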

\begin{proof}
  Let us consider the function
\be 
\label{azdz}
f_q(t;x,y):= \sum_{n=0}^{\infty}\frac{q^{({}^n_2)}\,t^n}{(q;q)_n}\tilde{h}_{n,\alpha}(x,y|q). 
\ee
By replacing   in   (\ref{azdz})   gdq-H2P $\tilde{h}_{n,\alpha}(x,y|q)$ by its explicit expression (\ref{sama:qdoublyH}) we obtain 
\bea 
\label{pasd}
f_q(t;x,y)=\sum_{n=0}^{ 
\infty }t^{n}\left(\sum_{k=0}^{\lfloor\,n/2\,\rfloor}
\frac{ (-1)^k q^{({}^n_2)-2n k+k(2k+1) }  x^{n-2k}\,y^k}{(q;q)_{n-2k,\alpha}\,  (q^2;q^2)_k}\right).
\eea
The right-hand side of  (\ref{pasd})   also reads 
\be 
\sum_{n=0}^{\infty }\sum_{k=0}^{\left\lfloor  {n}/{2}\right\rfloor } 
 \frac{ (-1)^k q^{\binom{n-2k}{2}} (yt^2)^k (xt)^{n-2k}}{(q;q)_{n-2k,\alpha}\,  (q^2;q^2)_k}.
\ee
Next, changing $n-2k$ by $r,\; r=0, 1, \cdots$, the last relation becomes 
\be 
\sum_{n=0}^{\infty }\frac{\left( -yt^2\right) ^{n}}{(q^2;q^2)_{n}}\sum_{r=0}^\infty
\frac{q^{\binom{r}{2}} \left( xt\right) ^{r}}{(q;q)_{r,\alpha}}.    
\ee
Hence,  
\be 
\label{62ee}
f_q(t;x,y)= \tilde{e}_{q^2,-\frac{1}{2}}(-yt^2)\tilde{E}_{q,\alpha}(xt).  
\ee
\end{proof}
Now, we are in position to prove   Theorem \ref{thm1}.
\begin{proof}{(of Theorem \ref{thm1})}
Replacing $t$ by $u \op t$ in (\ref{gentfunc}), we find the following generating function 
\be
\label{samadoublygeneratrice}
\tilde{E}_{q,\alpha}\Big[ (u \op t)x\Big]\tilde{e}_{q^2,-\frac{1}{2}} \Big[-y(u \op t)^2 \Big] =
 \sum_{n=0}^{\infty} \frac{q^{({}^n_2)} (u \op t)^n}{(q;q)_n}  \tilde{h}_{n,\alpha}(x,y|q)
\ee
which by using (\ref{insert}), becomes
\be 
\label{unctionqdoublygeneratrice}
 \tilde{E}_{q,\alpha}\Big[(u \op t)x\Big]=\tilde{E}_{q^2,-\frac{1}{2}} \Big[y(u \op t)^2 \Big]
 \sum_{n=0}^{\infty} \frac{q^{({}^n_2)} (u \op t)^n}{(q;q)_n}  \tilde{h}_{n,\alpha}(x,y|q).
\ee
Replacing $y$ by $\omega$ and (\ref{unctionqdoublygeneratrice}), respectively,   in (\ref{samadoublygeneratrice}),
we get
\be 
\label{oublygeneratrice}
 \sum_{n=0}^{\infty} \frac{q^{({}^n_2)} (u \op t)^n}{(q;q)_n}  \tilde{h}_{n,\alpha} (x,\omega|q) =
\ee
\be
=\tilde{e}_{q^2,-\frac{1}{2}} \Big[-\omega(u \op t)^2 \Big] \tilde{E}_{q^2,-\frac{1}{2}} \Big[y(u \op t)^2 \Big] 
 \sum_{n=0}^{\infty} \frac{q^{({}^n_2)} (u \op t)^n}{(q;q)_n}  \tilde{h}_{n,\alpha} (x,y|q).\nonumber
\ee
By using (\ref{insert}), the last relation reads
\be 
\label{qdoublygeneratrice}
 \sum_{n=0}^{\infty} \frac{q^{({}^n_2)} (u \op t)^n}{(q;q)_n}  \tilde{h}_{n,\alpha}(x,\omega|q) 
\ee
\be
=\tilde{ e}_{q^2,-\frac{1}{2}}\Big[(-\omega\oplus_{q^2} y)(u \op t)^2\Big]
 \sum_{n=0}^{\infty} \frac{q^{({}^n_2)} (u \op t)^n}{(q;q)_n}  \tilde{h}_{n,\alpha}(x,y|q).\nonumber
\ee
According to    (\ref{q-expo}), the right-hand side of (\ref{qdoublygeneratrice}) can be written as  
\be
\label{stees}
 \sum_{r=0}^{\infty} \frac{ (-\omega\oplus_{q^2} y)^{r }(u \op t)^{2r }}{(q^2;q^2)_r} \sum_{n=0}^{\infty} \frac{q^{({}^n_2)} (u \op t)^n}{(q;q)_n}  \tilde{h}_{n,\alpha}(x,y|q).
\ee 
Let us substitute
 $ n +2r =k ~\Longrightarrow~ r  \leq  \lfloor\,k/2\,\rfloor$ in (\ref{stees}), then we have: 
\be
\label{restees}
\sum_{n=0}^{\infty}\left( \sum_{k=0}^{\lfloor\,n/2\,\rfloor} \frac{(q^{\binom{n-2k}{2}}(-\omega\oplus_{q^2} y)^{k } }{(q^2;q^2)_k\,(q;q)_{n-2k}}   \tilde{h}_{n-2k,\alpha}(x,y|q)\right) (u \op t)^{n}.
\ee
Next, replacing (\ref{restees}) in   (\ref{qdoublygeneratrice}), we obtain
\be 
\label{qssdoublygeneratrice}
 \sum_{n=0}^{\infty} \frac{q^{({}^n_2)} (u \op t)^n}{(q;q)_n}  \tilde{h}_{n,\alpha}(x,\omega|q) =
\ee
\be
\sum_{n=0}^{\infty}\left( \sum_{k=0}^{\lfloor\,n/2\,\rfloor} \frac{( q^{\binom{n-2k}{2}}(-\omega\oplus_{q^2} y)^{k } }{(q^2;q^2)_k\,(q;q)_{n-2k}}   \tilde{h}_{n-2k,\alpha}(x,y|q)\right) (u \op t)^{n}.\nonumber
\ee
Finally, on equating the   coefficients of like powers of $ (u \op t)^n/(q;q)_n$ in (\ref{qssdoublygeneratrice}),   we get the desired identity. 
\end{proof}
We have the following special cases of   Theorem \ref{thm1} of particular interest.
\begin{corollary}
 Letting:
\begin{itemize}
\item[(i)] $y=0$ in  the assertion (\ref{thme1}) of   Theorem \ref{thm1},  we get the definition of  gdq-H2P (\ref{sama:qdoublyH}), i.e.,
\be 
\label{thrt1}
  \tilde{h}_{n,\alpha}(x, \omega|q)=
 (q;q)_n\sum_{k=0}^{\lfloor\,n/2\,\rfloor} \frac{(-1)^kq^{-2nk +k(2k+1) }\,x^{n-2k}\,\omega^{k } }{(q^2;q^2)_k\,(q;q)_{n-2k,\alpha}};
\ee 
\item[(ii)] $\omega=0$ in   the assertion (\ref{thme1}) of   Theorem \ref{thm1},  and  using (\ref{def2}), we get the   inversion formula for   gdq-H2P
\be 
\label{inve}
  x^n=
 (q;q)_{n,\alpha}\sum_{k=0}^{\lfloor\,n/2\,\rfloor} \frac{q^{-2nk+3k^2 }\;y^k }{(q^2;q^2)_k\,(q;q)_{n-2k}}   \tilde{h}_{n-2k,\alpha} (x,y|q). 
\ee 
\item[iii)] For $y=1$, the summation formulae (\ref{thme1}) 
can be expressed in terms of generalized discrete $q$-Hermite II polynomials $\tilde{h}_{n,\alpha}(x;q)$. Also,  the summation formulae (\ref{thme1}) 
can be written   in terms of  discrete $q$-Hermite II polynomials $\tilde{h}_n(x;q)$ by choosing  $y=1$ and $\alpha=-1/2.$ 
\end{itemize}
 \end{corollary}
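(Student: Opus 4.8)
The plan is to obtain all three assertions purely as specializations of the connection formula (\ref{thme1}), which I take as already established. The only ingredient beyond direct substitution is the evaluation of the degenerate Hahn $q$-addition $(-\omega\oplus_{q^2}y)^{k}$ at the two boundary values $y=0$ and $\omega=0$, together with the closed form (\ref{def2}) for a polynomial whose second argument vanishes.

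For part (i) I would set $y=0$ in (\ref{thme1}). From the definition (\ref{addition}) of the Hahn $q$-addition, every term of $(a\oplus_{q^2}b)^{k}$ carries a factor $b^{k}$ except the one coming from the top of the product, so with $b=0$ only that term survives and $(-\omega\oplus_{q^2}0)^{k}=(-\omega)^{k}=(-1)^{k}\omega^{k}$. Simultaneously (\ref{def2}) gives $\tilde{h}_{n-2k,\alpha}(x,0|q)=\dfrac{(q;q)_{n-2k}}{(q;q)_{n-2k,\alpha}}\,x^{n-2k}$. Substituting both into the right-hand side of (\ref{thme1}), the factor $(q;q)_{n-2k}$ cancels against the denominator, and what remains is exactly the defining sum (\ref{sama:qdoublyH}), i.e. (\ref{thrt1}), with $\omega$ playing the role of the second argument. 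Thus (i) is the consistency check confirming that the connection formula collapses to the definition.

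For part (ii) I would set $\omega=0$ in (\ref{thme1}). The left-hand side becomes $\tilde{h}_{n,\alpha}(x,0|q)=\dfrac{(q;q)_{n}}{(q;q)_{n,\alpha}}\,x^{n}$ by (\ref{def2}). On the right I evaluate $(0\oplus_{q^2}y)^{k}$; from (\ref{addition}) only the leading term survives, giving $(0\oplus_{q^2}y)^{k}=y^{k}q^{2\binom{k}{2}}=y^{k}q^{k(k-1)}$. The extra factor $q^{k(k-1)}$ combines with the exponent $q^{-2nk+k(2k+1)}$ already present, and collecting powers yields $-2nk+k(2k+1)+k(k-1)=-2nk+3k^{2}$. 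Clearing the common prefactor $(q;q)_{n}/(q;q)_{n,\alpha}$ then turns the identity into
\[
x^{n}=(q;q)_{n,\alpha}\sum_{k=0}^{\lfloor n/2\rfloor}\frac{q^{-2nk+3k^{2}}\,y^{k}}{(q^2;q^2)_{k}\,(q;q)_{n-2k}}\,\tilde{h}_{n-2k,\alpha}(x,y|q),
\]
which is precisely the inversion formula (\ref{inve}).

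For part (iii) I would set $y=1$ in (\ref{thme1}) and invoke the remark $\tilde{h}_{m,\alpha}(x,1|q)=\tilde{h}_{m,\alpha}(x;q)$ recorded after the definition; applying it to each summand $\tilde{h}_{n-2k,\alpha}(x,y|q)$ rewrites the whole expansion in terms of the generalized discrete $q$-Hermite II polynomials $\tilde{h}_{m,\alpha}(x;q)$. Specializing further to $\alpha=-1/2$ and using $\tilde{h}_{m,-1/2}(x,1|q)=\tilde{h}_{m}(x;q)$ gives the corresponding expansion in the discrete $q$-Hermite II polynomials. No computation beyond these substitutions is required. The only place that demands care, and hence the main (if modest) obstacle, is the correct evaluation of the two degenerate $q$-additions and the exponent bookkeeping in part (ii); once $q^{k(k-1)}$ is folded into the prefactor and the powers are gathered as $-2nk+3k^{2}$, the remaining algebra is routine cancellation.
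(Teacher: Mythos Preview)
Your proposal is correct and follows exactly the (implicit) approach of the paper: the corollary is stated there without a separate proof, since each item is obtained by the very substitutions you carry out, together with (\ref{def2}) and the evaluations $(-\omega\oplus_{q^2}0)^{k}=(-\omega)^{k}$, $(0\oplus_{q^2}y)^{k}=q^{k(k-1)}y^{k}$. Your exponent bookkeeping in (ii), combining $-2nk+k(2k+1)$ with $k(k-1)$ to get $-2nk+3k^{2}$, is the only nontrivial step and it is done correctly; the verbal description of the degenerate $q$-addition in (i) is slightly imprecise, but the conclusion $(-\omega\oplus_{q^2}0)^{k}=(-\omega)^{k}$ is right.
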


\section{Concluding remarks}
In the previous sections,   we have  introduced     gdq-H2P $\tilde{h}_{n,\alpha}(x,y|q)$  and derived  several properties. Also, we have derived implicit summation formula  for   gdq-H2P $ \tilde{h}_{n,\alpha}(x,y|q) $ by using different analytical means  on their  generating function. This process can be extended  to summation formulae for more generalized forms of $q$-Hermite polynomials.  This study is still in progress.

We note that the generating function of   even   and odd    gdq-H2P  $ \tilde{h}_{n,\alpha}(x,y|q) $  are given by
\be 
\label{last}
 \sum_{n=0}^\infty\frac{(-t^2)^nq^{n(2n-1) }\,}{(q;q)_{2n}}\tilde{h}_{2n,\alpha}(x,y|q)=\frac{q^{\alpha(\alpha+\frac{1}{2})}(q^2;q^2)_\infty}{(q^{2\alpha+2};q^2)_\infty\,}\,\frac{x^{-\alpha}J_{\alpha}^{(2)}(2x q^{-\alpha-\frac{1}{2} };q^2) }{(y\,t^2;q^2)_\infty}\nonumber
\ee
and
\be
\label{lasts}
 \sum_{n=0}^\infty\frac{(-1)^nq^{n(2n+1) } \,t^{2n+1}}{(q;q)_{2n+1}}h_{2n+1,\alpha}(x,y|q)=\frac{q^{\alpha(\alpha+1)}(q^2;q^2)_\infty}{(q^{2\alpha+2};q^2)_\infty\,}\,\frac{x^{-\alpha}J_{\alpha}^{(2)}(2x q^{-\alpha};q^2) }{(y\,t^2;q^2)_\infty}\nonumber
\ee 
where  $J_{\nu}^{(2)}(z;q)$ is the $q$-analogue of the Bessel function  \cite{ASK}.\\
 Indeed, it is well known that from (\ref{gentfunc}),   the generating function of   gdq-H2P $ \tilde{h}_{n,\alpha}(x,y|q) $    is given by 
\be 
\tilde{E}_{q,\alpha}(xt)\tilde{e}_{q^2,-\frac{1}{2}}(-yt^2)=\sum_{n=0}^\infty\frac{q^{n(n-1)/2} t^{n}}{(q;q)_{n}}\tilde{h}_{n,\alpha}(x,y|q)
\ee 
which on separating the power in the right-hand side into their even and odd terms by using the elementary identity
\be
\sum_{n=0}^\infty f(n)=
\sum_{n=0}^\infty f(2n) +\sum_{n=0}^\infty f(2n+1) 
\ee
becomes
\be 
\label{Gegentfunc}
\tilde{E}_{q,\alpha}(xt)\tilde{e}_{q^2,-\frac{1}{2}}(-yt^2)=
\ee
\be
\sum_{n=0}^\infty\frac{q^{n(2n-1)}\, t^{2n}}{(q;q)_{2n}}\tilde{h}_{2n,\alpha}(x,y|q)+\sum_{n=0}^\infty\frac{q^{n(2n+1)}\,  t^{2n+1}}{(q;q)_{2n+1}}\tilde{h}_{2n+1,\alpha}(x,y|q). \nonumber
\ee 
Now replacing $t$ by $i\,t$ in (\ref{Gegentfunc}) and equating the real and imaginary parts of the resultant equation, we get the    generating function of   even   and odd    gdq-H2P $ h_{n,\alpha}(x,y|q)$ as   
\be 
\sum_{n=0}^\infty\frac{(-1)^nq^{n(2n-1)}\,t^{2n}}{(q;q)_{2n}}\tilde{h}_{2n,\alpha}(x,y|q)=Cos_{q,\alpha}(xt)\tilde{e}_{q^2,-\frac{1}{2}}(yt^2)
\ee
and
\be
\sum_{n=0}^\infty\frac{(-1)^nq^{n(2n+1)}\, \,t^{2n+1}}{(q;q)_{2n+1}}\tilde{h}_{2n+1,\alpha}(x,y|q)=Sin_{q,\alpha}(xt)\tilde{e}_{q^2,-\frac{1}{2}}(yt^2) 
\ee 
where the generalized $q$-Cosine and $q$-Sine are defined as:
\bea 
\label{qpo-alpha}
Cos_{q,\alpha}(x):&=&\sum_{k=0}^{\infty} \frac{(-1)^nq^{n(2n-1)}\,x^{2n}}{(q;q)_{2n,\alpha}},
\\
\label{qpo-alphas}
Sin_{q,\alpha}(x):&=& \sum_{k=0}^{\infty} \frac{(-1)^nq^{n(2n+1)}\,x^{2n+1}}{(q;q)_{2n+1,\alpha}}.
\eea
By using (\ref{sam1}) and (\ref{sam2}), respectively, the relations   (\ref{qpo-alpha}) and (\ref{qpo-alphas}) can be   expressed in terms of basic hypergeometric functions     as 
\bea 
\label{cos}
Cos_{q,\alpha}(x)&=&{}_0\Phi_{1}\left(\begin{array}{c}- \\
q^{2\alpha+2}\end{array}\Big|\;q^2;\;-q x^2\right)
\\
\label{coss}
Sin_{q,\alpha}(x)&=&\frac{x}{1-q^{2\alpha+2}}\,\,{}_0\Phi_{1}\left(\begin{array}{c}-\\
q^{2\alpha+4}\end{array}\Big|\;q^2;\;-q^2 x^2\right).
\eea
The $q$-analogue of the Bessel function is defined \cite[p.20, Eq.(0.7.14)]{ASK} by
\be
\label{Lag}
J_{\nu }^{(2)}(z;q)=\frac{(q^{\nu +1};q )_\infty}{(q;q)_\infty}\left(\frac{z}{2}\right)^\nu\,{}_0\Phi_{1}\left(\begin{array}{c}-\\
q^{ \nu+1}\end{array}\Big|\;q;\;- \frac{q^{  \nu+1}z^2}{4}\right)
\ee
from which  the generating functions of   (\ref{cos})  and (\ref{coss}) follow.


\end{document}